
\documentclass[10pt,journal,compsoc]{IEEEtran}

\ifCLASSOPTIONcompsoc
  \usepackage[nocompress]{cite}
\else
  \usepackage{cite}
\fi

\ifCLASSINFOpdf
\else
\fi


\usepackage{times}  
\usepackage{helvet}  
\usepackage{courier}  
\usepackage{url}  
\usepackage{graphicx}  
\usepackage{multicol}

\setlength{\pdfpagewidth}{8.5in}  
\setlength{\pdfpageheight}{11in}  


\usepackage{times}
\usepackage{xcolor}
\usepackage{soul}
\usepackage[utf8]{inputenc}
\usepackage[small]{caption}

\usepackage{wrapfig}
\usepackage{graphicx}
\usepackage{amsmath}
\usepackage{amssymb}
\usepackage{amsthm}
\usepackage{multirow}
\usepackage{lipsum}
\usepackage{colortbl}
\usepackage{cutwin}

\usepackage{algorithmic}
\usepackage{color}
\usepackage{verbatim}

\usepackage{cleveref}

\newtheorem{proposition}{Proposition}

\newtheorem{definition}{Definition}

\DeclareMathOperator*{\argmax}{arg\,max}

\newcommand\T{\rule{0pt}{2.1ex}}
\newcommand\B{\rule[-0.7ex]{0pt}{0pt}}

\definecolor{Gray}{gray}{0.8}

\pdfinfo{
/Title (Effectiveness of Diffusing Information through a Social Network in Multiple Phases)
        /Author (Swapnil Dhamal)
        }
%

\title{\vspace{2mm}{Effectiveness of Diffusing Information through\\a Social Network in Multiple Phases}}

\markboth{Swapnil Dhamal. ~Effectiveness of Diffusing Information through a Social Network in Multiple Phases}{}

\author{
Swapnil Dhamal
 \IEEEcompsocitemizethanks{
        \vspace{1mm}
         \IEEEcompsocthanksitem Swapnil  Dhamal is a postdoctoral researcher with Institut National de Recherche en Informatique et en Automatique (INRIA), Sophia Antipolis-M\'editerran\'ee, France.
         }
}

\begin{document}

%

\IEEEtitleabstractindextext{
\vspace{2mm}
\begin{abstract}
We study the effectiveness of using multiple phases for maximizing the extent of information diffusion through a social network, and present insights while considering various aspects. In particular, we focus on the independent cascade model with the possibility of adaptively selecting seed nodes in multiple phases based on the observed diffusion in preceding phases, and conduct a detailed simulation study on real-world network datasets and various values of seeding budgets. We first present a negative result that more phases do not guarantee a better spread, however the adaptability advantage of more phases generally leads to a better spread in practice, as observed on real-world datasets. We study how diffusing in multiple phases affects the mean and standard deviation of the distribution representing the extent of diffusion. We then study how the number of phases impacts the effectiveness of multiphase diffusion, how the diffusion progresses phase-by-phase, and what is an optimal way to split the total seeding budget across phases. Our experiments suggest a significant gain when we move from single phase to two phases, and an appreciable gain when we further move to three phases, but the marginal gain thereafter is usually not very significant. Our main conclusion is that, given the number of phases, an optimal way to split the budget across phases is such that the number of nodes influenced in each phase is almost the same.
\end{abstract}
\vspace{2mm}
}

\maketitle

\IEEEdisplaynontitleabstractindextext

\IEEEpeerreviewmaketitle

\IEEEraisesectionheading{\section{Introduction}}
\label{sec:mp_intro}

\IEEEPARstart{O}{wing} to the advent of online social networks, diffusing information to a large population in a short span of time has become a reality.
 Product companies (or campaigners) use this fact to their advantage by harnessing social networks for viral marketing wherein, they offer free or discounted product samples (or present some information) to a selected set of individuals, who advertize the product (or forward the information) to their friends. If these friends buy the product and like it, they likely advertize it to their friends, and this process goes on.
 A fundamental problem with respect to this idea is to select the nodes to whom free or discounted product samples are to be given (often referred to as {\em seed nodes\/}), such that the number of individuals influenced by the product marketing (often referred to as {\em extent of diffusion\/}) is maximized.
 This leads to an important optimization problem: given a budget on the number of seed nodes, which nodes should be selected for seeding so that the extent of diffusion is maximized?
 This problem is often referred to as the problem of {\em influence maximization in a social network\/}.

 Owing to the inherent uncertainties of social networks due to  uncertainties  in human behavior, transmission of information, interaction frequencies, etc., the process of information diffusion is an uncertain one. It is thus a stochastic process, with several possibilities of instances with certain probabilities.
  A widely used metric of performance of an influence maximizing seed-selection algorithm is the expected extent of diffusion taken over all instances (or a large number of instances if computing expectation over all instances is computationally hard).  
 However, selecting a set of seed nodes may lead to an excellent extent of diffusion in one instance, while a very poor extent in another. 
 This has motivated researchers to consider the possibility of adaptive seeding, which could reduce the uncertainty involved by adaptively selecting seed nodes based on the diffusion observed thus far.
In order to understand the advantages and disadvantages of multiphase diffusion, we first present some preliminaries.

\subsection{Preliminaries}

Consider a social network $G$, with $N$ as its set of $n$ nodes and $E$ as its set of $m$ weighted directed edges.

\subsubsection{Independent Cascade (IC) model}


Each directed edge $(u,v) \in E$ has an associated weight indicating the influence probability $p_{uv}$ (the probability with which node $u$ would influence node $v$, if node $u$ is influenced). 
The diffusion progresses in discrete time steps. At time  0, the selected seed nodes are influenced. At time step 1, each seed node $u$ independently attempts to influence each of its neighbors $v$ and succeeds with probability $p_{uv}$. In time step 2, all the nodes that were influenced in time step 1 independently attempt to influence their respective neighbors and succeed with the corresponding influence probabilities. 
This process continues until no further nodes can be influenced.
The expected extent of diffusion can  be obtained by taking a weighted average of the number of nodes influenced over all possible diffusions using IC model (where  weight is the probability of progressing according to the corresponding diffusion).

\subsubsection{Live Graph}

A live graph $L$ is an instance of  $G$, obtained by sampling  edges with  corresponding edge influence probabilities.
A live graph, being an instance, is a directed but unweighted graph.
An edge $(u,v)$ is present in it with probability $p_{uv}$ and absent with probability $1-p_{uv}$, independent of the presence of other edges.
The probability of occurrence of a live graph $L$ is thus,
$\prod_{(u,v) \in L} (p_{uv}) \prod_{(u,v) \notin L} (1-p_{uv})$.
Kempe, Kleinberg, and Tardos
\cite{kempe2003maximizing}  show that,
since   influence probabilities do not change with time, sampling an edge $(u,v)$ in the beginning of  diffusion is equivalent to sampling it when $u$ is influenced. 
The expected extent of diffusion starting from a set of seed nodes, can thus be defined as a weighted average of the number of nodes reachable from that set over all  live graphs (where  weight is the probability of occurrence of the corresponding live graph).

\subsubsection{Multiphase Information Diffusion using IC Model}

Let $p$ be the number of phases for which the diffusion is planned to run.
After the selection of a certain number of seed nodes in the first phase using an influence maximizing algorithm, the diffusion starts and progresses according to IC model, until no further nodes can be influenced. 
Then based on the observed diffusion thus far, the network could be modified by removing nodes which are already influenced, since they would play no further role in changing the diffusion state of the network. 
This modified network can be viewed the {\em diffusion state} of the original network after the first phase.
Now  a certain number of seed nodes are selected for the second phase using the influence maximizing algorithm on this modified network, following which, the diffusion progresses until no further nodes can be influenced. This process repeats until the termination of the last phase (phase $p$).
%
%
%
Note that we could have initiated a phase before the termination of the preceding phase, however it would partially nullify the purpose of using multiple phases by not observing the diffusion till its completion. As our primary goal is to study  effectiveness of multiphase diffusion, we consider  usage of multiple phases at their full potential by allowing the diffusion in a phase to terminate before initiating the next phase.

We now provide an intuition why multiphase diffusion would be  advantageous as compared to single diffusion.
An influence maximizing algorithm which is intended to maximize the expected number of influenced nodes, would possibly not select an influential node if it is likely to get influenced owing to other already selected seed nodes with high probability. But unless this high probability is equal to 1, there will exist `bad' live graphs in which the influential node would not get influenced. Adaptive seeding would select this node as a seed node if our observed diffusion indicates that the underlying live graph is `bad'. This would thus improve the extent of diffusion in expectation.
On the other hand, the algorithm may have selected a node because it is influential enough, but not likely to get influenced owing to other already selected seed nodes. Again, there would exist live graphs in which this node gets influenced without having to select it as seed node. In such live graphs, adaptive seeding would instead select another node which did not actually get influenced in our observed diffusion, which again would lead to a higher expected extent of diffusion.

A  drawback of multiphase diffusion is that the diffusion may progress at a  slower rate owing to the delay in selecting seed nodes in subsequent phases.
Like in most of the literature, we consider that this delay  does not impact the value of our  diffusion; we provide a note on accounting for this delay at the  end of the paper.






Given a total budget of $k$ that is to be distributed across $p$ phases,
let $k_q$ be the budget allotted for phase $q$.

\begin{definition}[Budget split]
A budget split is a vector representing how the total budget is allotted for different phases.
\end{definition}

So for an information diffusion process that is executed over $p$ phases, the budget split can be represented as $(k_1,\ldots,k_p) = (k_q)_{q=1}^p$.
We use $\mathbf{K}$ to denote a budget split.

Since the total budget is $k$, we should have that 
$
\sum_{q=1}^p k_q \leq k
$.
Note that if there is any surplus budget $(k-\sum_{q=1}^p k_q)$, this surplus can be used up in the terminal phase to influence additional nodes which could not be influenced at the end of phase $p$. So it is optimal to have the above constraint tight, that is,
$
\sum_{q=1}^p k_q = k
$.

\begin{definition}[Optimal budget split]
Given an influence maximizing algorithm and a budget for a network, an optimal budget split is one that maximizes the expected extent of diffusion achieved over all phases combined.
\end{definition}

Given an influence maximizing algorithm and budget $k$ for a  network,
let $\beta_q(\mathbf{K})$ be the expected extent of diffusion or the expected number of nodes influenced in phase $q$, if the budget split is $\mathbf{K}$.
So the expected extent of diffusion over $p$ phases is $\sum_{q=1}^p \beta_q(\mathbf{K})$. An optimal budget split is, thus,

\begin{small}
\begin{align*}
\mathbf{K}^* = (k_1^*,\ldots,k_p^*) = \argmax_{\mathbf{K}} \sum_{q=1}^p \beta_q(\mathbf{K})
\end{align*}
\end{small}

\subsection{Relevant Work}
\label{sec:mp_relevant}

The problem of maximizing information diffusion in social networks was first studied from  algorithmic and computational viewpoint by
Kempe, Kleinberg, and Tardos
\cite{kempe2003maximizing}, where they showed $\left( 1-\frac{1}{e} \right)$-approximation guarantee of greedy algorithm for selecting seed nodes.
However, it is computationally infeasible to run this algorithm on large social networks.
Several alternatives have been proposed to bypass this computational barrier. 
Goyal, Lu, and Lakshmanan
\cite{goyal2011celf} present
a lazy forwarding approach to avoid unnecessary computations made in the greedy algorithm.
Chen, Wang, and Yang
\cite{chen2009efficient}
present 
a number of efficient versions of the greedy algorithm and also propose a very fast degree discount heuristic.
Wang, Chen, and Wang
\cite{wang2012scalable}
propose a fast heuristic (PMIA) based on the concept of maximum influence arborescence, and show that it performs very close to the greedy algorithm on real-world social network datasets.
Jung, Heo, and Chen
\cite{jung2012irie} propose an even faster high-performance heuristic (IRIE) by integrating influence ranking and influence estimation, 
making it feasible to run on networks with tens of millions of edges.

Adaptive seeding or diffusing information through a social network in more than one phase is a relatively nascent area.
Golovin and Krause
\cite{golovin2010adaptive,golovin2011adaptive}
introduce the concept of adaptive submodularity
and prove that any problem satisfying this property facilitates an adaptive greedy algorithm to provide an approximation guarantee;
they show that this property is satisfied for adaptive seeding in IC model.
For
 adaptive seeding  with any monotone submodular
function,
Badanidiyuru et al.
\cite{badanidiyuru2016locally} 
propose an approximation algorithm  based on locally-adaptive policies.
%
Specific to influence maximization in social networks,
Singer
\cite{singer2016influence}
presents a survey on  adaptive seeding methodologies.

Seeman and Singer
\cite{seeman2013adaptive} were among the first to dedicatedly study the adaptive seeding framework.
Rubinstein, Seeman, and Singer
\cite{rubinstein2015approximability}
study the approximability of adaptive seeding algorithms that incentivize
nodes with heterogeneous activation costs.
Horel and Singer 
\cite{horel2015scalable} develop scalable methods for adaptive selection of the seed set with provable guarantees for models in which the influence of a set can be expressed as the sum of the influence of its members. However, these methods do not apply to  IC-like models.
Correa et al. 
\cite{correa2015adaptive}
show that in the homogeneous case (where every pair of nodes randomly meet at the same rate), the adaptivity benefit   is bounded by a constant. 

Dhamal, Prabuchandran, and Narahari
\cite{dhamal2phase} show a trade-off between the size of the observed diffusion
and the exploitation based on the observed 
diffusion, 
while splitting the budget between two phases.
%
%
%
Tong et al.
\cite{tong2016adaptive}
 study adaptive seeding in the dynamic IC model, and provide  performance guarantee of the greedy adaptive seeding algorithm.
%
%
%
%
Yuan and Tang
\cite{tang2016no}
present a theoretical study of a framework where seed node can be selected before the ongoing diffusion terminates, and hence develop a policy that achieves a bounded approximation ratio.
%
%
Mondal, Dhamal, and Narahari
\cite{mondal2017two}
study the influence maximization problem in two phases, where the first phase is regular diffusion and the second phase is boosted  using referral incentives.


\subsection{Contributions}

To the best of our knowledge, this is the first work to study information diffusion in more than two phases, and present insights on the distribution, phasewise progression, and optimal budget split.
Our specific contributions are as follows:

\begin{itemize}
\item 
We present a negative result that more phases do not guarantee a higher extent of diffusion.
\item
Using real-world network datasets, we study how diffusing in multiple phases affects the mean and standard deviation of the distribution representing extent of diffusion.
\item
We study the effectiveness of multiphase diffusion with respect to the number of phases, and the phase-by-phase progression of diffusion so as to quantify the delay in diffusion owing to the delayed selection of seed nodes.
\item 
We develop a method for determining an optimal budget split for a given number of phases, based on the nature of the underlying network.
\end{itemize}



\section{Problems in Multiphase Diffusion}



Consistent with almost all studies on adaptive seeding, we assume that seed nodes are selected in a given phase without considering their eventual impact on the next phase, that is, seed nodes are selected in phase $q$ by using a single phase optimal policy with the reduced budget of $k_q$, without accounting for the presence of phase $q+1$. 
This approach is termed as {\em myopic approach\/} by Dhamal, Prabuchandran, and Narahari \cite{dhamal2phase}.
In their study, the authors show that using farsighted approach (selecting nodes in a phase by considering its impact on the next phase)
with \textbf{any} budget split
would always lead to a better extent of diffusion in expectation, but do not support or oppose any statement regarding whether the myopic multiphase approach would always outperform single phase.
We fill this gap by showing a negative result.

\subsection{A Negative Result}

Firstly, it can be easily seen that more phases may not be advantageous if the budget split is not made judiciously. 
For instance, a 2-phase budget split of $(\frac{1}{3}k,\frac{2}{3}k)$ would most certainly be better than a 3-phase budget split of $(k-2,1,1)$ for reasonably large  $k$, since the latter would perform  close to single phase, while a $(\frac{1}{3}k,\frac{2}{3}k)$ split would have a significant gain over single phase \cite{dhamal2phase}.
However, one could ask: would having more phases by subdividing allocation of an existing phase, result in a better performance?
%
%
We  show the answer is negative using a simple counterexample.
According to \cite{dhamal2phase},
using a farsighted approach, a budget split of $(1,1)$ would perform at least as good as single phase with $k=2$ on any network.
We show that this is not guaranteed using a myopic approach, that is, there exists a network for which we could come up with a 2-phase budget split that performs worse than single phase.

\begin{proposition}
Replacing budget split $(\ldots,k_q,\ldots)$ with $(\ldots,x,k_q-x,\ldots),x \in \{1,\ldots,k_q-1\}$ may lead to a worse extent of diffusion, even with respect to an optimal policy.
\label{prop:negative}
\end{proposition}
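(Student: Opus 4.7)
My plan is to give an explicit counterexample: a small deterministic network in which, even with the optimal single-phase policy applied in each phase, the two-phase split $(1,1)$ yields strictly less expected spread than the single-phase budget of $2$. The governing intuition is that the myopic split must commit to the best singleton in phase~$1$, whereas the globally optimal pair of seeds need not contain the best singleton; if the best single seed overlaps heavily with otherwise complementary seeds, the split loses.

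I would take every edge probability equal to $1$, so that the unique live graph is the graph itself and randomness drops out. The gadget consists of a central vertex $C$ with out-edges to a follower set $F_C$ of size $s$, together with two complementary vertices $A$ and $B$ whose follower sets $F_A$ and $F_B$ are mutually disjoint, each intersect $F_C$ in $s/2$ vertices, and each contain $r$ further exclusive followers; every other vertex is a sink. Writing $\sigma(\cdot)$ for the expected spread, one reads off $\sigma(\{C\})=1+s$, $\sigma(\{A\})=\sigma(\{B\})=1+s/2+r$, $\sigma(\{A,B\})=2+s+2r$, and $\sigma(\{C,X\})\le 2+s+r$ for every vertex $X$.

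Choosing $s>2r>0$ makes $C$ the strictly best singleton, so the phase-$1$ optimum of the myopic $(1,1)$ split is $C$, contributing $1+s$ influenced vertices. Once $F_C\cup\{C\}$ is influenced and removed, the best remaining seed is $A$ or $B$, contributing only $1+r$ new vertices, for a myopic total of $2+s+r$. The single-phase optimum of budget $2$ is the pair $\{A,B\}$, with spread $2+s+2r>2+s+r$, and this strict gap establishes the proposition for $k_q=2,\,x=1$; since the statement only asserts the existence of a bad instance, this suffices.

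I expect the principal hurdle to be simultaneously balancing the overlaps so that $C$ is uniquely the best singleton while $\{A,B\}$ strictly dominates every pair $\{C,X\}$. The single constraint $s>2r>0$ accomplishes both, but one must verify that $\sigma(\{C,A\})$ and $\sigma(\{C,B\})$ are exactly the sums produced by the two myopic phases, so that the only way the split could match the single-phase optimum would be through a pair of the form $\{C,X\}$, which the construction rules out. If one wanted to extend the result to larger $k_q$ or $x$, the natural route is to take several disjoint copies of the gadget, each draining one phase of its budget; this is not needed for the proposition as phrased.
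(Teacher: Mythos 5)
Your proposal is correct and takes essentially the same approach as the paper: an explicit small counterexample for $p=2$, $k=2$, $x=1$ in which the unique best singleton (your hub $C$, the paper's node $C$ with two deterministic out-edges) is not contained in the best pair, so the myopic phase-1 commitment to it strictly loses to the single-phase optimum $\{A,B\}$. The only difference is cosmetic: you make the gadget fully deterministic via overlapping follower sets with $s>2r>0$, whereas the paper achieves the same "best singleton $\notin$ best pair" structure with two probability-$0.5$ edges into the hub; both verifications go through.
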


\begin{proof}
~
\\
\begin{minipage}{.32\textwidth}
We show this with a  counterexample for $p=2,q=1$ and $k=k_q=2,x=1$.
\\
The edge influence probabilities are: $p_{AC}=p_{BC}=0.5 \,,\, p_{CD}=p_{CE}=1$.
The example graph is alongside.
\end{minipage}
\begin{minipage}{.14\textwidth}
\centering
\includegraphics[scale=.35]{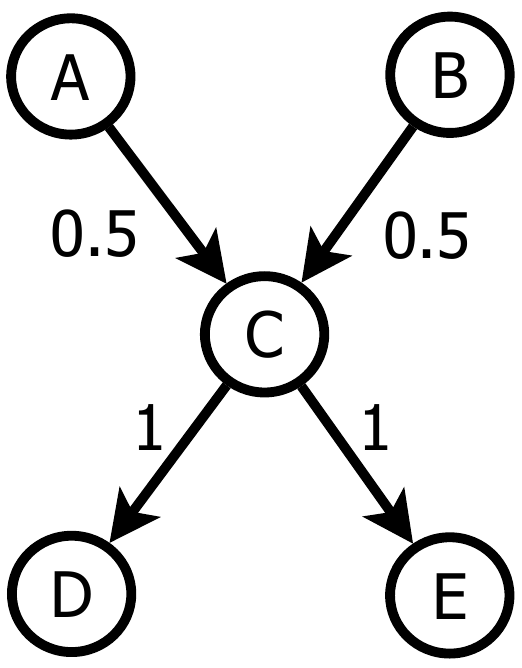}
\end{minipage}
 
  \vspace{3mm}
When using single phase with $k=2$, the optimal solution to maximize expected diffusion is to select $A$ and $B$ as the two seed nodes.
Node $C$ would then get influenced with probability $1-(1-0.5)(1-0.5)=0.75$.
Since nodes $D$ and $E$ would get influenced with probability $1$ if node $C$ is influenced, we have that these two nodes would get influenced with probability $0.75$ each.
So with $A$ and $B$ as the two seed nodes, the expected number of nodes influenced at the end of the diffusion process is $1+1+0.75+0.75+0.75 = 4.25$.
It can be easily seen that selecting any other set of two nodes would lead to a lower expected extent of diffusion, for instance, $\{C,(A \text{ or } B)\}$ leads to $4$, $\{C,(D \text{ or } E)\}$ leads to $3$, $\{(A \text{ or } B),(D \text{ or } E)\}$ leads to $3$, and $\{D,E\}$ leads to $2$.
So using optimal policy, the expected number of influenced nodes at the end of single phase diffusion process is $\mathbf{4.25}$.

When using two phases with budget split $(1,1)$, that is, $k_1=k_2=1$, the optimal node to select in the first phase is $C$, which leads to nodes $C,D,E$ getting influenced with probability $1$ in  first phase. So the expected number of nodes influenced at the end of  first phase is $3$. Selecting any other node would lead to a lower number of influenced nodes, for instance, $A$ or $B$ would lead to $2.5$, and $D$ or $E$ would lead to $1$.
Since node $C$ is the optimal choice for  first phase, we know with certainty that at the start of  second phase, nodes $C,D,E$ are already influenced. Since $k_2=1$, it is optimal to select either node $A$ or $B$ as the seed node for  second phase, which would lead to exactly $1$ additional node getting influenced in  second phase. 
So at the end of this two-phase diffusion process, the number of influenced nodes is $\mathbf{4}$, which is lower than that achieved using single phase ($4.25$).
\end{proof}



%



The above negative result is of theoretical interest, however, it has been shown that adaptive seeding performs better than single phase seeding on real-world  network datasets.
It is also known that an adaptive method of diffusing information under IC model preserves the approximation guarantee of $\left( 1-\frac{1}{e} \right)$ provided by  greedy  algorithm \cite{golovin2010adaptive,golovin2011adaptive}.
In this paper, we use the state-of-the-art IRIE algorithm, which performs very close to greedy algorithm while running several order of magnitude faster.

\subsection{Problems of Interest}

\subsubsection{Distribution of the Extent of Diffusion}


Information diffusion under IC model being a stochastic process, there are uncertainties involved regarding how the diffusion progresses. This is an important selling point of multiphase diffusion, since it reduces the uncertainty while selecting seed nodes in subsequent phases.
It is a well-accepted practice in the literature to derive conclusions regarding the performance of an algorithm by only considering the expected number of nodes influenced at the end of the diffusion process (or the expected extent of diffusion).
However,
it would be interesting to study how using multiple phases affects the entire distribution of the extent of diffusion, instead of its expected value alone. In particular, we could plot the distributions, observe their nature, and also study the implications of their standard deviations (in addition to their means).
%
%
Further, it would  be interesting to see how the distribution changes from phase to phase, and what it actually means when we say that multiphase diffusion  reduces uncertainty. 





\subsubsection{Impact of the Number of Phases}


It has been a consistent result in the two-phase diffusion and adaptive seeding literature that using two phases yields a significant gain over single phase. A natural question arises regarding how beneficial going beyond two phases would be.
A primary objective of using social network for diffusing information, is to enable the information to reach as many individuals as possible. 
But it may also be important that the information reaches the individuals as early as possible, especially in the presence of a competing information or when the value of  information decreases with time.
So if using $p+1$ phases instead of $p$ phases (with the same total budget $k$) improves the extent of diffusion negligibly, it may be well advised to not increase the number of phases.
Further, additional phase may incur additional costs.
This motivates us to study how the amount of gain changes as we increase the number of phases.

\subsubsection{Determining Optimal Budget Split}

Multiphase diffusion relies on the fact that we observe diffusion at the end of a phase, and exploit this observation by adaptively selecting seed nodes in the following phase.
An optimal way to split the total budget is thus important 
to find an optimal balance between observation and exploitation.
Hence the effectiveness of a $p$-phase diffusion fundamentally depends on how the total budget is split among the $p$ phases.
%
In order to generalize our findings to general social network datasets, it is also important to identify patterns and insights behind the observed optimal budget splits.


\subsubsection{Progression of Diffusion with Phases}

As mentioned earlier, though
our primary objective is to reach or influence as many individuals as possible, it is also important that the information reaches them as early as possible, especially in the presence of a competing information or when the value of the information or product decreases with time.
So given that we are diffusing information across a total of $p$ phases, it is important to know how many nodes get influenced at the end of each phase.
Several marketing, pricing, or campaigning decisions may be impacted with the knowledge of how diffusion progresses over its different phases.
For instance, if the product value decreases over time, a company may be willing to compromise on the optimal budget split and hence the final extent of diffusion, so as to have a higher extent of diffusion during the early phases.

\section{Simulation Setup}


\subsection{Simulation Technique}

%

We first discuss a naive approach of simulating multiphase diffusion, explain its drawbacks,  then present our approach.

\subsubsection{A Naive Approach}
Starting with $k_1$ best seed nodes,
the simulations are first run for $\mathcal{M}_1$ Monte Carlo iterations, each according to IC model, to arrive at $\mathcal{M}_1$ possible diffusion states at the end of phase 1.
For each of these $\mathcal{M}_1$ states, we then adaptively select $k_2$ best seed nodes and run the simulations for $\mathcal{M}_2$ iterations to arrive at $\mathcal{M}_2$ diffusion states.
So now we have a total of $\mathcal{M}_1 \mathcal{M}_2$ diffusion states at the end of phase 2.
Continuing thus, we have $\prod_{q=1}^p \mathcal{M}_q$ diffusion states at the end of phase $p$.
If we run the simulations for $10^4$ iterations (as run for single phase in most papers in the literature) in each phase, 
we need to run the influence maximizing algorithm on $1$ state (the given graph) in the first phase, $10^4$ states in second phases, $10^{8}$
 states in third phases, and so on.
In addition to selecting seed nodes, simulating diffusion using IC model also would add considerably to the running time;
we need to run the diffusion process $10^4$ times in the first phase, $10^8$ times in second phase, $10^{12}$
 times in third phases, and so on.
So it is clear that the branching process grows exponentially and it is rather infeasible to run the simulations on large datasets over large number of phases.





\subsubsection{Our Approach}


We presample a set of $\mathcal{M}$ live graphs before the diffusion starts, instead of determining the presence of each edge $(u,v)$ in live graph after $u$ is influenced. We then use these as a common set of live graphs across various simulations.
This idea similar to \cite{chen2009efficient} wherein live graphs are predetermined to enable precomputations of reachability from any given node and hence avoid repeated computations during program execution.
Such an approach can be justified by considering that the underlying live graph already exists (but known to us only probabilistically) and is uncovered during  diffusion process.
Also, finding reachability from a set of nodes in live graph is equivalent to diffusing information starting from these nodes \cite{kempe2003maximizing}.

By presampling live graphs, the reachability from every node in every live graph is computed once and stored.
So its highlight is that we do not have to simulate diffusion using IC model each time; only retrieve the stored set of reachable nodes
\cite{chen2009efficient}.
Another advantage of presampling a common set of live graphs for all simulations (for different budget splits and also different number of phases) is that, we can not only compare their performances, but also reliably draw conclusions regarding aspects such as means and standard deviations of extents of diffusion during and after each phase, by comparing their distributions.







In our simulations, we set
$\mathcal{M}=10^4$.
For the datasets considered (enlisted later), this count of Monte Carlo simulations gave precise results (that is, running independent sets of $10^4$ Monte Carlo simulations lead to extents of diffusion with almost equal means and standard deviations).


\subsection{Extending Algorithm to Multiple Phases}

We use IRIE as our influence maximizing algorithm for determining seed nodes.
To the best of our knowledge, this is the best known algorithm for its performance (very close to the greedy algorithm) as well as running time (few seconds for a graph with million edges).
In our simulations, we set the damping factor $\alpha=0.7$ as identified by the authors to be the value for which IRIE's accuracy is found to be the highest.
In the first phase, we run IRIE just as for single phase, albeit with a budget of $k_1$. The reachability from the selected $k_1$ seed nodes in each of the $\mathcal{M}$ live graphs lead to the corresponding $\mathcal{M}$ diffusion states.
Subsequently, for $q$ ranging from $1$ to $p$, we select $k_q$ seed nodes in phase $q$ for each of the diffusion states, which then after considering reachability from these newly selected nodes in the corresponding live graphs, lead to $\mathcal{M}$ new/updated diffusion states (which act as the starting point for phase $q+1$).
%

Hence the number of diffusion states for which we run IRIE is 
\mbox{$(p-1)\mathcal{M}+1$} (including the starting state, the given graph itself).
It is to be noted that we run IRIE independently on these states,
that is, we use IRIE as a black box. We do not eliminate the possibility of adapting IRIE in a better way for diffusion in multiple phases; we defer this to future work.
The running time of IRIE approximately increases with the number of edges in the network. So the overall running time of the entire multiphase seed selection algorithm for a given budget split is proportional to
\mbox{$|E|(p-1)\mathcal{M}$}.

\begin{figure*} 
\small
\begin{tabular}{ccc}
\hspace{-8mm}
\includegraphics[scale=.44]{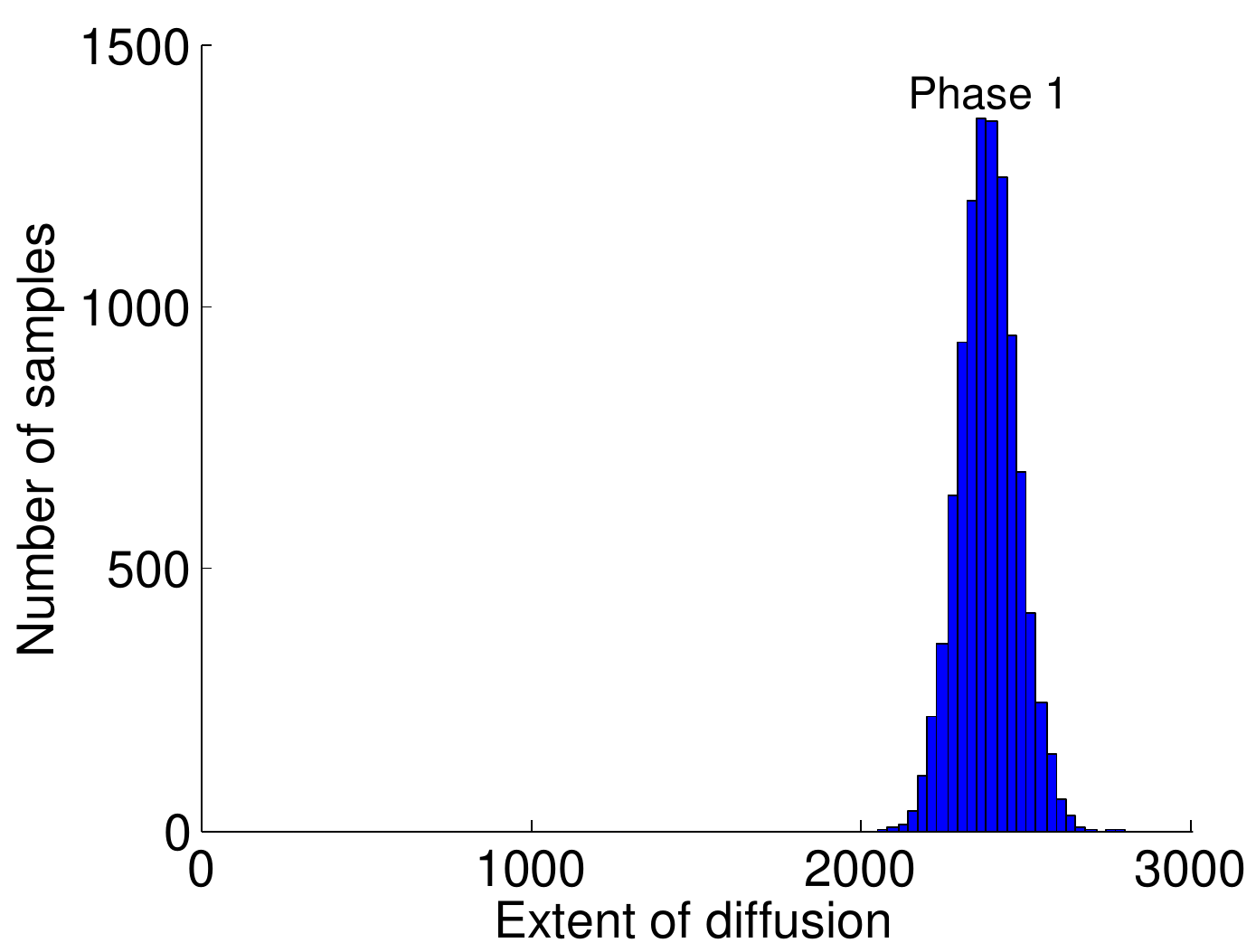}
&
\hspace{-7mm}
\includegraphics[scale=.44]{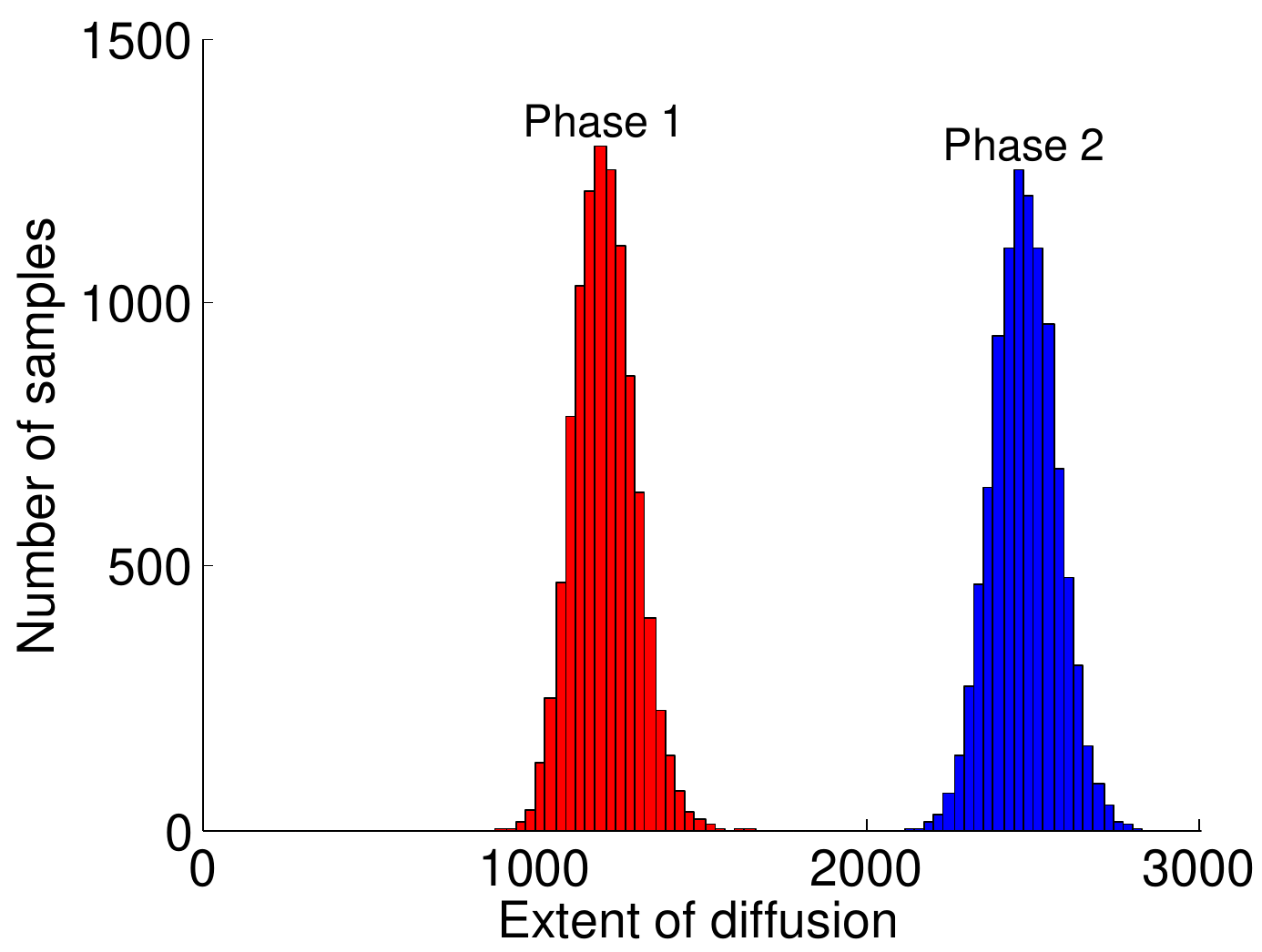}
&
\hspace{-6mm}
\includegraphics[scale=.44]{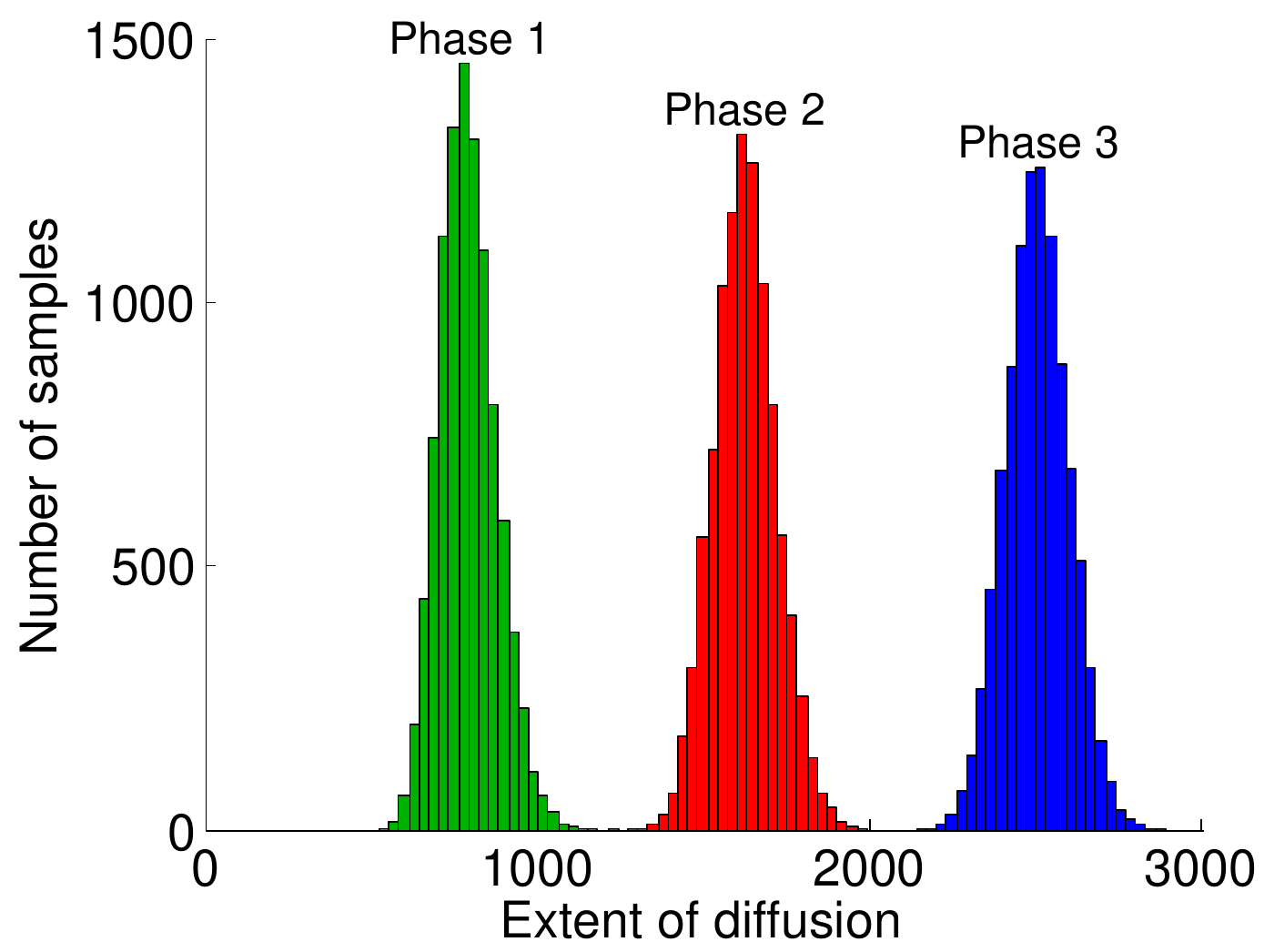}
\\
(a) Single phase diffusion
&
(b) Two-phase diffusion
&
(c) Three-phase diffusion
\end{tabular}
\caption{Distribution of extent of diffusion for different number of phases for NetHEPT (WC) with $k=200$
}
\label{fig:dist}
\end{figure*}

\subsection{Searching for Optimal Budget Split}

Given a budget $k$, the budget $k_q$ allotted for phase $q$ can take $k+1$ possible values (including $0$).
Since there are $(p-1)$ degrees of freedom owing to  constraint \mbox{$\sum_{q=1}^p k_q = k$},
the number of points (corresponding to possible budget splits) in the
standard discrete simplex is 
{\tiny{$\dbinom{k+p-1}{p-1}$}}.
So it is  infeasible to exhaustively search over all  budget splits even for relatively small values of $p$, for practical values of $k$. 


It is a general observation in the literature that 
the extent of diffusion usually turns out to be a smooth function of the budget, that is, a slight change in budget usually does not result in a drastic change in the extent of diffusion.
We harness this to avoid exhaustively searching over all budget splits.
We search for an optimal budget split in two steps, by first doing a coarse search (looking at a small number of well-separated budget splits) and then a fine search (looking in the neighborhood of good-valued budget splits found in our coarse search).
We later briefly discuss how one could improve on this search technique for our particular problem.

In our coarse search, we  assign each $k_q$ a value from  $\{0, 0.1k, 0.2k, \ldots, k\}$ (11 values) such that 
\mbox{$\sum_{q=1}^p k_q = k$}. 
%
The number of points (possible budget splits) in the 
 standard discrete simplex is now
{\tiny{$\dbinom{(11-1)+(p-1)}{p-1}=\dbinom{p+9}{p-1}$}}. For $p=5$, this equals $1001$, which is still a  large search space.
However, we could reduce it by noting that, if $k_q=0$ for some $q$, it is equivalent to having less than $p$ phases. Also, several budget splits would be equivalent, for instance, the 3-phase budget splits $(k_x,k_y,0),(k_x,0,k_y),(0,k_x,k_y)$ are all equivalent to the 2-phase budget split $(k_x,k_y)$.
So the results for such budget splits where $k_q=0$ for some $q$, can be directly appended from the results obtained for less than $p$ phases.
So in our coarse search, we only look at budget splits where $\forall q, k_q>0$ and is an integral multiple of $0.1k$.
This is equivalent to slicing a bar of length $k$ into $p$ pieces by making $p-1$ cuts at integral multiples of $0.1k$.
Since there are $9$ possible locations where we can make these $p-1$ cuts, the number of  ways in which we can make these cuts is
{\tiny{$\dbinom{9}{p-1}$}}. 
This is a valid value
since our simulations have $p\leq 5$.
For $p=5$, this equals $126$, which is a tractable search space.

Following the coarse search, we do a fine search for budget allocations in multiples of $0.05k$ (rounded below, if required).
On finding the budget split vector giving  maximum extent of diffusion (probability of finding multiple maxima is 0), say $(h_q)_{q=1}^p$, we look for the budget splits obtained by incrementing and decrementing its individual coordinates by $0.05k$.
Note that since we have $p-1$ degrees of freedom, we are looking at a $(p-1)$-dimensional space, and so the number of increments and decrements for all free coordinates combined is $2(p-1)$. We also check throughout that  budget allocation stays non-negative for the constrained coordinate.

Now for each of the $p-1$ dimensions, we have values obtained by incrementing, decrementing, and not changing the coordinate, one of which gives the maximum value among the three; let this coordinate be $z_q$. We now can form a hypercube whose vertices have the $q^{th}$ coordinate as $h_q$ or $z_q$. Note that this could be a less than $(p-1)$ dimensional hypercube if $h_q=z_q$ for some $q$'s. The vertices of this hypercube are now the new budget splits we search on. 
If a budget split is already explored, we recall its stored value.
So the maximum number of new budget splits derived using this hypercube (which is when $h_q \neq z_q$ for all free coordinates) would be $2^{p-1}-p$.
This concludes our fine search in the neighborhood of the best budget split that was obtained using coarse search. 
The maximum number of new budget splits found is thus, $2^{p-1}-p+2(p-1)$; this equals $19$ for $p=5$. We compute  expected extent of diffusion for each of them.
We follow this method for the second best upto the  tenth best budget split (run in parallel on different machines independently; so computations  repeated for some budget splits).



We employ the above search method for $p \geq 3$; for $p=2$, we  search through all multiples of $0.05k$.
In addition to the budget splits searched as above, we explore budget splits of certain manually chosen ratios, which we see later.


\subsection{Datasets Used}

        The study of multiphase diffusion is computationally very intensive in nature,
owing to the large number of intermediate diffusion states after each phase on which we need to run the seed selection algorithm, as well as owing to
the large number of possible budget splits we need to take into consideration.
So with the computational power available to us, it was infeasible to run the multiphase simulations on very large datasets studied in the literature for single phase diffusion.
So we focus our simulation study on moderate sized datasets (which are commonly used in the literature) to draw conclusions and provide insights
based on our observations.

We conduct extensive simulations for upto 5 phases on NetHEPT dataset [$|V|=15K, |E|=31K$].
%
This dataset has been extensively used for experimentation in the literature \cite{kempe2003maximizing,chen2009efficient,wang2012scalable}. 
We also conduct simulations on Facebook dataset [$|V|=4K, |E|=88K$] \cite{leskovec2012learning} for upto 4 phases.

For modeling edge influence probabilities in networks,
we use two widely accepted ways, namely, the {\em weighted cascade (WC) model} and the {\em trivalency (TV) model} \cite{wang2012scalable,jung2012irie}.
In WC model, for every edge $(u,v)$ in the network, $p_{uv}$ is equal to the reciprocal of $v$'s degree.
In TV model, every edge in the network is assigned a probability value that is uniformly sampled from the set of values $\{0.001, 0.01, 0.1\}$.

In addition to studying NetHEPT with a budget of $k=50$ (like in most of the literature), we also look at  $k=200$ (like in \cite{dhamal2phase}) since it would allow each individual phase to have enough budget to show an impact when the number of phases is large.
Also, studying different values of budgets would allow us to identify any patterns and draw more reliable conclusions.

\section{Simulation Results}
\label{sec:mp_sim}


In this section, we present detailed simulation results with precise observations and plots for NetHEPT dataset, 
since we could do an extensive search for optimal budget split even for 5 phases, and also run a large number of Monte Carlo iterations for it.
%
%
As mentioned earlier, we also conduct simulations on Facebook dataset for upto 4 phases.
%
%
Unless specified, the results for these datasets qualitatively followed a very similar  pattern as that for the NetHEPT dataset.





\subsection{Distribution of the Extent of Diffusion}

All distributions corresponding to the extent of diffusion (for any number of phases or  for any amount of budget, at the end of any phase or within any intermediate phase) exhibit a bell-shaped nature.
\Cref{fig:dist} presents the
distributions of extents of diffusion over phases, for different number of phases with the corresponding optimal budget split, for NetHEPT (WC)  with $k=200$
(see \Cref{tab:budgetsplits} for optimal budget splits).
It can be notably seen that the means of the histograms are evenly spaced (e.g., for 3-phase diffusion, the mean extent after first phase equals the difference between the mean extents of second and first phase, which also equals the difference between the mean extents of third and second phase).
This has  implications as we will see in  \Cref{sec:result_budgetsplit}.


 \Cref{tab:Std200} presents the standard deviations of the overall extent of diffusion that happened till the end of each phase, and also standard deviations of extent of diffusion that happened during each phase (extent of diffusion at the end of the phase minus extent of diffusion at the beginning of the phase).
 It is important to note that we could compare the distributions across different value of phases and budget splits in a reliable way, because the set of underlying live graphs is common to all the simulations for a given network dataset.
 


\setlength\tabcolsep{.7mm}
\begin{table}[t]
\caption{Mean extents of diffusion using various budget splits on NetHEPT (WC) (optimal budget splits are highlighted)
}
\label{tab:budgetsplits}
\small
\begin{tabular}{|c|c||c|c||c|c|}
\hline
\T \B
 &  & \multicolumn{2} {c||} {$k=200$} & \multicolumn{2} {c|} {$k=50$}
 \\ 
\hline \T \B 	{Pha}	&	{Budget split}	&	{Budget split}	&	{Mean}	&	{Budget split}	&	{Mean}	\\
ses & ratio & & extent & & extent
\\
\hline
\hline \T \B 	$1$	&	$1$	&	$(200)$	&	$2389$	&	$(50)$	&	$947$	\\
\hline \T \B 	$2$	&	$1\!:\!1$	&	$(100,\!100)$	&	$2464$	&	$(25,\!25)$	&	$961$	\\
\hline \rowcolor{Gray} \T \B 	$2$	&	$1\!:\!2$	&	$(67,\!133)$	&	$2478$	&	$(17,\!33)$	&	$965$	 \\
\hline \T \B 	$3$	&	$1\!:\!1\!:\!1$	&	$(66,\!67,\!67)$	&	$2491$	&	$(16,\!17,\!17)$	&	$967$	\\
\hline \T \B 	$3$	&	$1\!:\!1\!:\!2$	&	$(50,\!50,\!100)$	&	$2499$	&	$(12,\!13,\!25)$	&	$969$	\\
\hline \T \B 	$3$	&	$1\!:\!2\!:\!1$	&	$(50,\!100,\!50)$	&	$2494$	&	$(12,\!25,\!13)$	&	$967$	\\
\hline \T \B 	$3$	&	$2\!:\!1\!:\!1$	&	$(100,\!50,\!50)$	&	$2481$	&	$(25,\!12,\!13)$	&	$963$	\\
\hline \T \B 	$3$	&	$1\!:\!2\!:\!6$	&	$(22,\!45,\!133)$	&	$2502$	&	$(6,\!12,\!32)$	&	$973$	\\
\hline \T \B 	$3$	&	$1\!:\!2\!:\!4$	&	$(28,\!58,\!114)$	&	$2506$	&	$(7,\!14,\!29)$	&	$973$	\\
\hline \T \B 	$3$	&	$3\!:\!2\!:\!4$	&	$(66,\!44,\!90)$	&	$2493$	&	$(16,\!11,\!23)$	&	$969$	\\
\hline \rowcolor{Gray} \T \B 	$3$	&	$1\!:\!2\!:\!3$	&	$(33,\!67,\!100)$	&	$2508$	&	$(8,\!17,\!25)$	&	$975$	\\
\hline \T \B 	$4$	&	$1\!:\!1\!:\!1\!:\!1$	&	$(50,\!50,\!50,\!50)$ 	&	$2509$	&	$(12,\!12,\!13,\!13)$	&	$974$	\\
\hline \T \B 	$4$	&	$1\!:\!2\!:\!6\!:\!18$	&	$(7,\!15,\!45,\!133)$	&	$2511$	&	$(2,\!4,\!11,\!33)$	&	$974$	\\
\hline \T \B 	$4$	&	$1\!:\!2\!:\!4\!:\!8$	&	$(14,\!27,\!54,\!105)$	&	$2515$	&	$(4,\!7,\!13,\!26)$	&	$978$	\\
\hline \rowcolor{Gray} \T \B 	$4$	&	$1\!:\!2\!:\!3\!:\!4$	&	$(20,\!40,\!60,\!80)$	&	$2519$	&	$(5,\!10,\!15,\!20)$	&	$982$	\\
\hline \T \B 	$5$	&	$1\!:\!1\!:\!1\!:\!1\!:\!1$	&	$(40,\!40,\!40,\!40,\!40)$	&	$2513$	&	$(10,\!10,\!10,\!10,\!10)$	&	$978$	\\
\hline \T \B 	$5$	&	$1\!:\!2\!:\!6\!:\!18\!:\!54$	&	$(3,\!6,\!16,\!45,\!130)$	&	$2514$	&	$(1,\!2,\!5,\!12,\!30)$	&	$980$	\\
\hline \T \B 	$5$	&	$1\!:\!2\!:\!4\!:\!8\!:\!16$	&	$(6,\!12,\!25,\!52,\!105)$	&	$2518$	&	$(2,\!4,\!7,\!13,\!24)$	&	$980$	\\
\hline \rowcolor{Gray} \T \B 	$5$	&	$1\!:\!2\!:\!3\!:\!4\!:\!5$	&	$(14,\!27,\!40,\!54,\!65)$	&	$2525$	&	$(4,\!7,\!10,\!13,\!16)$	&	$985$	\\
\hline 
\end{tabular}
\end{table}
\setlength\tabcolsep{2mm}

\begin{table}[t]
\caption{Standard deviations of extent of diffusion using optimal budget splits on NetHEPT (WC) with $k=200$}
\label{tab:Std200}
\centering
\small
\begin{tabular}{|c|c|c|}
\hline \T \B
Phases & {till end of each phase} & {during each phase} 
\\ \hline \T \B
$1$ & $(89)$ & $(89)$
\\ \hline \T \B
$2$ & $(87,96)$ & $(87,68)$
\\ \hline \T \B
$3$ & $(85,94,96)$ & $(85,72,55)$
\\ \hline \T \B
$4$ & $(75,94,97,97)$ & $(75,71,60,49)$
\\ \hline \T \B
$5$ & $(72,86,96,97,97)$ & $(72,69,56,50,42)$
\\ \hline
\end{tabular}
\end{table}

\begin{figure*}
\small
\begin{tabular}{p{.34\textwidth} p{.33\textwidth} p{.33\textwidth}}
\hspace{-7mm}
\includegraphics[scale=.47]{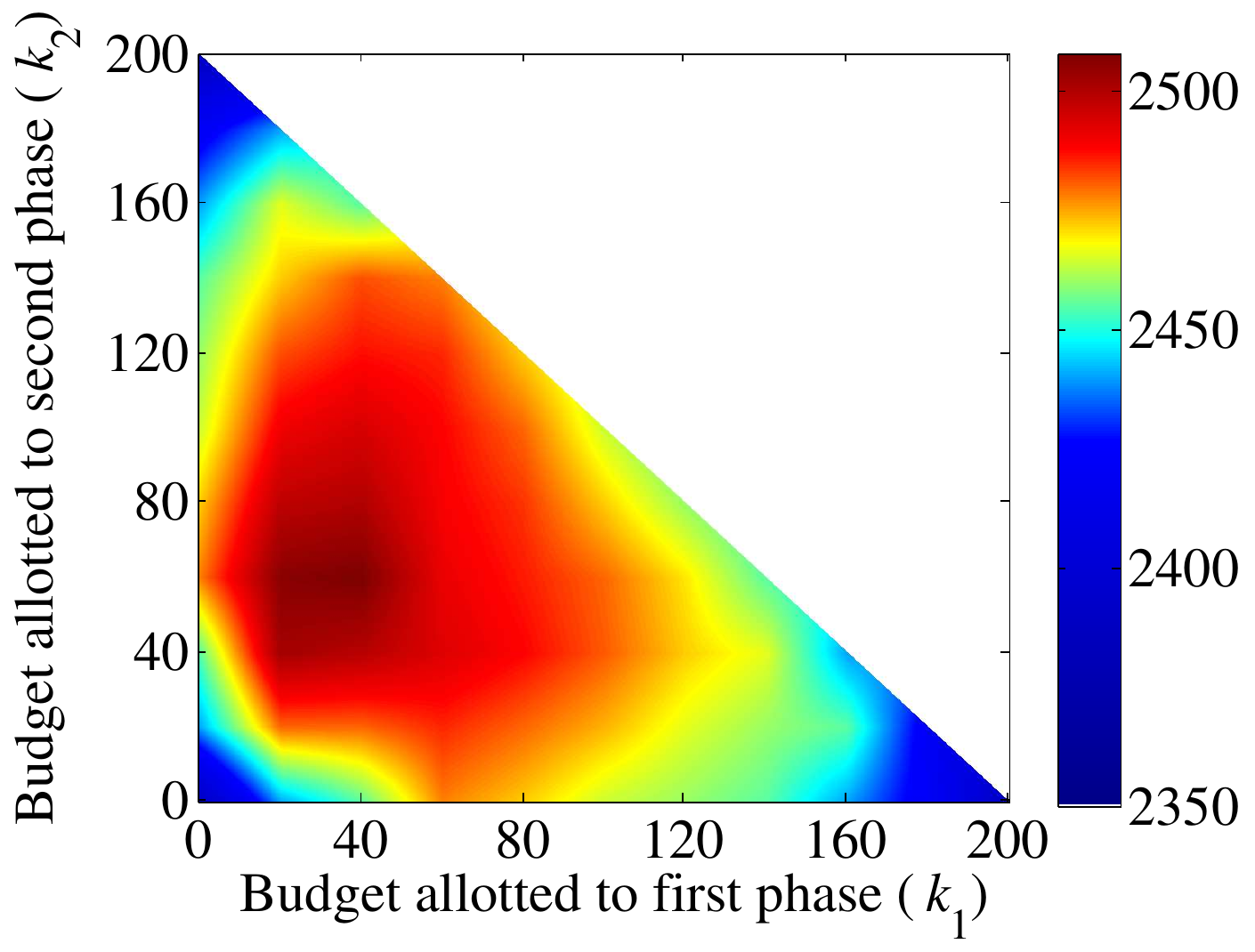}
&
\includegraphics[scale=.47]{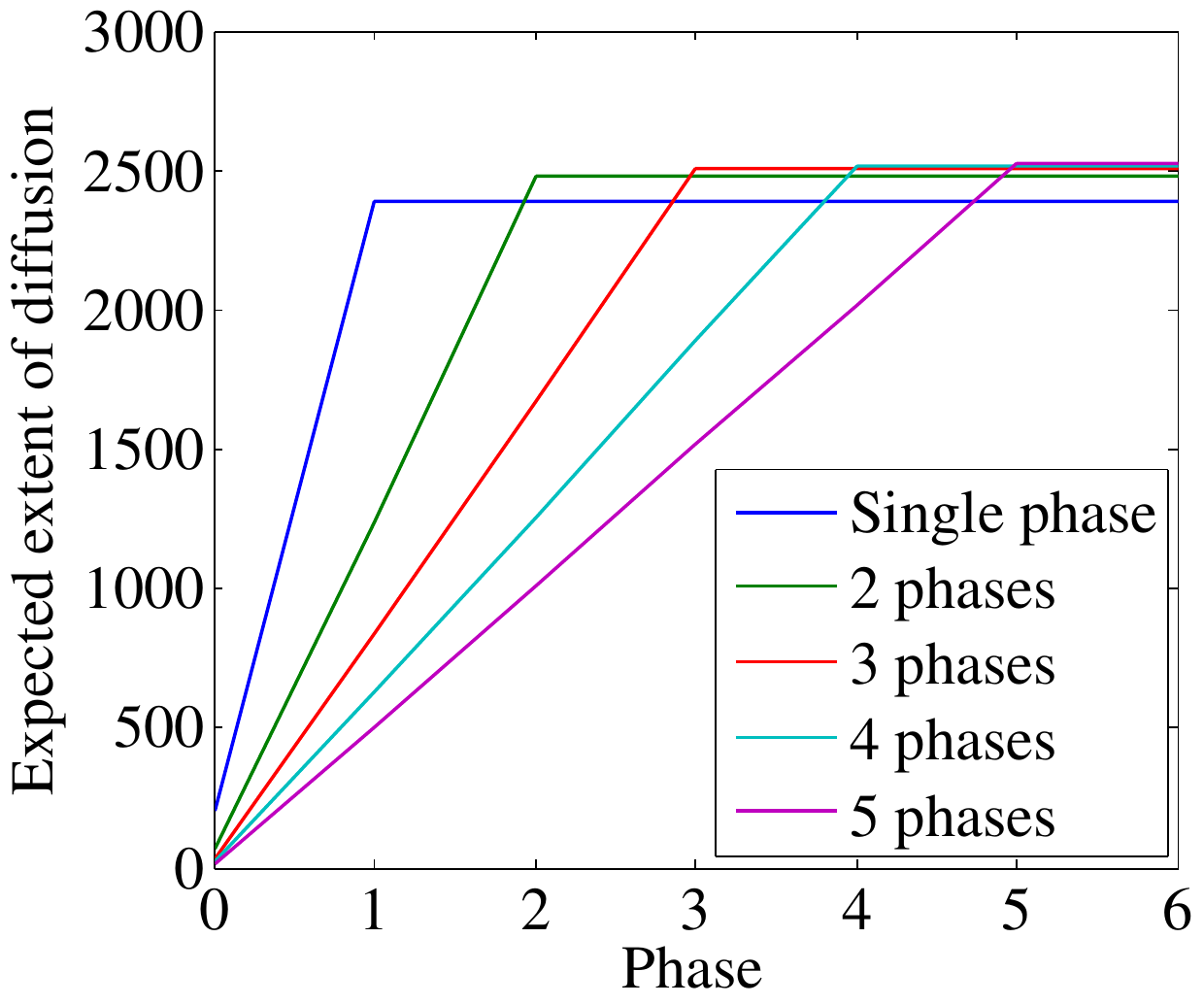}
&
\includegraphics[scale=.44]{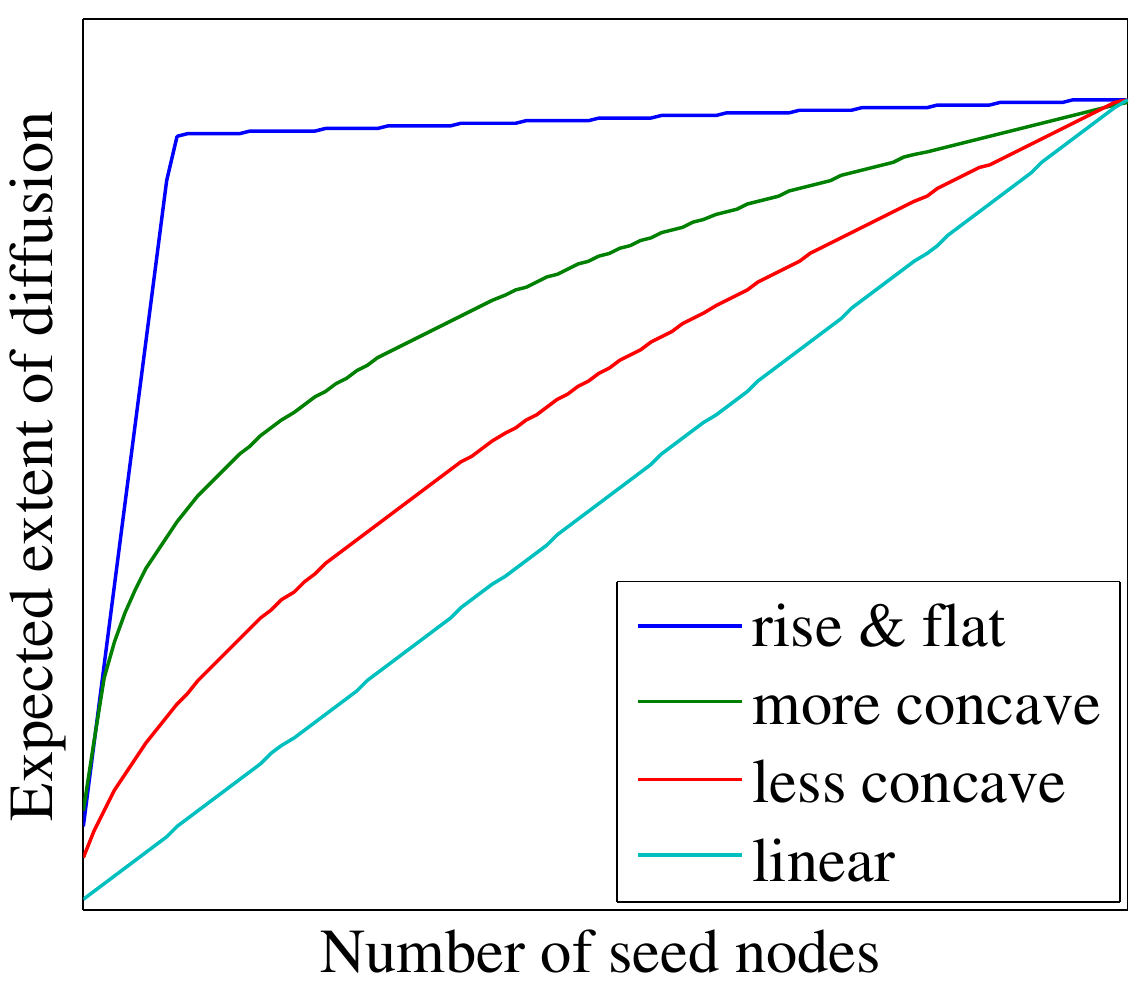}
\\
{(a) Effectiveness of various 3-phase budget splits for NetHEPT (WC) with $k=200$}
&
(b) Phasewise progression for different number of phases for NetHEPT (WC)  $k=200$
&
(c) Types of influenceability curves
\end{tabular}
\caption{Aspects of multiphase diffusion}
\label{fig:all}
\end{figure*}

We now try to numerically understand what it means by saying: using multiple phases would reduce uncertainty.
We first address the question: will multiple phases lead to a lower standard deviation at the end of the diffusion as compared to single phase with the same overall budget?
The answer is `no'.
%
It is true that for a low value in single phase for a bad live graph, the value improves for that live graph when we use multiple phases. But the value for a good live graph also improves. 
Moreover, multiphase diffusion would have a better reach than single phase diffusion, and would reach parts of the live graph which would stay unexplored in single phase for the same live graph. So the uncertainty of these newly explored parts also could get added to multiphase diffusion,
which may in fact may result in multiphase diffusion having a higher standard deviation than single phase.
This can be seen from \Cref{tab:Std200} where for instance, standard deviation at the end of the second phase of 2-phase diffusion ($96$) is greater than that at the end of single phase ($89$).
In general, these observations show that $p+1$ phases may actually lead to a higher standard deviation at the end of the diffusion as compared to $p$ phases with the same overall budget.



\Cref{tab:Std200} also shows that in a $p$-phase diffusion, the standard deviation of the extent of diffusion that progresses during phase $q+1$ is consistently less than that of the extent of diffusion that progresses during phase $q$.
For instance, in a $3$-phase diffusion, the standard deviation of the extent of diffusion that progresses during second phase ($72$) is less than that of the extent of diffusion that progresses during first phase ($85$).
One of the reasons for this observation could be a lower uncertainty in the extent of diffusion triggered by the selection of lesser influential seed nodes.



We now try to see what it  means when we say that `using multiple phases would reduce uncertainty as compared to single phase'.
To answer this more concretely, we quantify how the second phase reduces uncertainty in a multiphase diffusion, since this is the phase which first distinguishes multiphase diffusion from single phase.
For instance, to quantify how the second phase reduces uncertainty in a $p$-phase diffusion with budget split $(k_1,k_2,\ldots,k_p)$,
  could mean the following:
the standard deviation of the extent of diffusion in the second phase (say $\sigma_p$) would be less than the standard deviation of the additional number of nodes influenced using the single phase diffusion if the single phase diffusion budget  is increased from $k_1$ to $k_1+k_2$ (say $\sigma_s$).
%
We observe in our simulations that this statement holds true.
In particular, for NetHEPT (WC) with $k=200$, 
$\sigma_p$ for a given $p$ is the second component of the corresponding vector in the `during each phase' column of \Cref{tab:Std200}. These do not exceed $72$, while we observed that $\sigma_s$ was higher than $100$.

We can draw an insight from this discussion:
it is beneficial to select highly influential nodes in the initial phases since they would not only lead to a large extent of observed diffusion, but also high uncertainty, which can be then improved upon by selecting other nodes in subsequent phases. 





\subsection{Effectiveness with Number of Phases}

 
Tables \ref{tab:gain} and \ref{tab:facebook} present the gains achieved by  multiphase diffusion with various number of phases and  corresponding optimal budget splits for NetHEPT (WC) and Facebook (WC), respectively.
%
%
We also ran simulations for 10 phases with a number of manually chosen budget splits for NetHEPT and observed that the final extent of diffusion did not exceed an expected value of 2535, which is a 6.11\% gain over single phase.
%
%
So we conclude that there is significant gain when we move from single phase to 2 phases, and an appreciable  gain when we further move to 3 phases. But there is a sharp decline in the additional gain beyond 3 phases.


\begin{table}[t]
\caption{Gain achieved by using multiphase diffusion using optimal budget split for NetHEPT (WC) with $k=200$}
\label{tab:gain}
\centering
\small
\begin{tabular}{|c|c|c|c|c|}
\hline \T \B
Phases & $2$ & $3$ & $4$ & $5$
\\ \hline \T \B
\% gain over single phase & $3.73$ & $5.00$ & $5.44$ & $5.70$ 
\\ \hline \T \B
\% gain over one phase less & $3.73$ & $1.21$ & $0.44$ & $0.24$
\\ \hline
\end{tabular}
\end{table}

\setlength\tabcolsep{1.2mm}
\begin{table}
\caption{Optimal budget splits for Facebook (WC) with $k=50$}
\label{tab:facebook}
\centering
\small
\begin{tabular}{|c|c|c|c|}
\hline \T \B
Phases & $2$ & $3$ & $4$
\\ \hline \T \B
Optimal budget split & $(15,35)$ & $(5,15,30)$ & $(2, 7, 15, 26)$
\\ \hline \T \B
\% gain over single phase & $5.41$ & $6.26$ & $6.68$
\\ \hline \T \B
\% gain over one phase less & $5.41$ & $0.80$ & $0.40$
\\ \hline
\end{tabular}
\end{table}
\setlength\tabcolsep{2mm}

\subsection{Optimal Budget Split}
\label{sec:result_budgetsplit}


\Cref{fig:all}(a) presents a visualization of the expected extents of diffusion for various budget splits obtained using our coarse and fine searches for NetHEPT (WC) dataset (the actual discrete heatmap is smoothened for better visualization).
\Cref{tab:budgetsplits} presents the expected extents of diffusion for various 
manually chosen budget split ratios; the optimal budget splits for all phases are highlighted.
The budget splits for NetHEPT (TV) dataset with $k=200$ were also very similar.
For NetHEPT (TV) dataset with $k=50$, given the number of phases, a very large set of budget splits gave very similar and almost optimal results.
\Cref{tab:facebook} presents optimal budget splits for Facebook (WC) dataset.
For Facebook (TV), it was rather optimal to select very few nodes in the initial phases; in particular, it was optimal to choose one node each in the first three phases for the case of 4-phase diffusion.

We consistently observed that the optimal budget splits had a lower budget allotted to the initial phases than to the latter phases, that is, 
 $k_q < k_{q+1}$.
 As motivated earlier, the reason for finding an optimal budget split is to find an optimal balance between observation and exploitation. 
 As per our multiphase adaptation of IRIE, the most influential nodes get selected in the initial phases, while the lesser influential ones get selected in the later phases.
 Owing to the highly influential nature of the initially selected nodes, it suffices to select only a few of them in the initial phases to get a good enough observation of the diffusion, and then use the lesser influential ones to cover the parts of the network which could not be reached in the observed diffusion.
Thus, the budget allotted to the initial phases plays a  critical role to set a right balance between  observation and exploitation.



Our general observation is that an optimal budget split is the one which would lead to an almost equal number of nodes getting influenced in expectation, across the given number of phases.
This can been seen for NetHEPT (WC) dataset in \Cref{fig:all}(b), where the
expected extent of diffusion grows linearly with the number of intermediate phases elapsed.
%
From our studied datasets with different values of budgets, it was evidently the case that a good balance between the extent of observation of diffusion in earlier phases and exploitation by adaptively selecting seed nodes in later phases, was found when the expected extent of diffusion in all the phases was almost the same.
This was with the exception of Facebook (TV) for which the extent of diffusion was very high in earlier intermediate phases.
We  present a compelling insight behind these observations, for which we first introduce the notion of {\em influenceability curve\/} of a network.


%

\subsubsection{Influenceability Curve of a Network}

Given a diffusion model and a deterministic influence maximizing algorithm, a network would have a plot depicting the number of influenced nodes versus the number of seed nodes. 
This plot, in some sense, provides an indication of:
what fraction of best seed nodes budget would lead to what fraction of the maximum achievable extent of diffusion using that budget.
We call this plot as the 
{\em influenceability curve\/} of the network 
with respect to the given diffusion model and influence maximizing algorithm.
It can be seen in  the literature on maximizing information diffusion that,
this curve is concave for most real-world social networks.
The results on some popular datasets for single phase diffusion can be found in
\cite{wang2012scalable,jung2012irie}.

%

The influenceability curves for such real-world network datasets can be classified broadly into the following four types (illustration in \Cref{fig:all}(c)):
\begin{enumerate}
\item
Rise \& flat: Facebook~(TV), Epinions~(TV), Slashdot(TV)
\item
Very concave: Facebook~(WC), DBLP~(WC), Slashdot~(WC), Epinions~(WC), Arxiv~(TV) 
\item
Less concave: NetHEPT~(WC), NetHEPT~(TV), NetPHY~(WC), Arxiv~(WC)
\item
Linear: DBLP~(TV), Amazon~(WC), Amazon~(TV) 
\end{enumerate}


The influenceability curve is concave for most networks, which means that selecting first few nodes in earlier phases is enough to give a good enough observation of diffusion, which spares the possibility of selecting higher number of nodes in the later phases to exploit the observation.
For NetHEPT datasets (WC and TV) with a budget such as 50 or 200, it so happens that an equal distribution of extent of diffusion across phases would arise from a budget split which roughly follows an arithmetic progression; that is, the  split of 1:2 for 2 phases, 1:2:3 for 3 phases, etc., for a reasonable amount of budget. This explains the specific observation of \cite{dhamal2phase} concerning the 2-phase budget split for NetHEPT-like datasets.

Also, more concavity (\Cref{fig:all}(c)) means an even higher skew in the selection of nodes, since an even smaller number of nodes in initial phases could provide sufficient observation. In such cases, we would have optimal budget split such that  first phase budget allocation is considerably lower than that of subsequent phases (e.g., Facebook (WC) in \Cref{tab:facebook}).

If for some unconventional network, the influenceability curve is convex (possibly because no node individually is highly influential, but collectively larger seed sets become highly influential), this would mean that selecting a few nodes in earlier phases do not lead to a significant extent of diffusion and hence a poor observation, which would then lead to a not-so-good adaptive seeding in the later phases. So it would be better to select a higher number of nodes in the earlier phases, which collectively could provide a good enough observation.
As a middle ground between concave and convex, if  influenceability curve is close to linear, the budget could be split equally across the phases.

If the influenceability curve rises and flattens (e.g., Facebook (TV)), which means that very few nodes are extremely influential, it would be well advised to not select these nodes in one phase itself.
%
Using multiple phases, it is possible to ascertain whether a highly influential node, when selected as seed node, influences another highly influential node without having to select the latter as a seed node.
In the single phase case, the influence maximizing algorithm would have selected the latter, by computing that the latter is highly influential but perhaps not very likely to be influenced by other selected seed nodes.
However, if in our observed diffusion, the latter is indeed influenced without having to select it as a seed node, it could help save our budget which could be used for selecting other seed nodes.
If it is not influenced in our observed diffusion, we select it as seed node in a following phase. 
Combining these two cases, we would gain in expectation by not selecting the very highly influential nodes in   first phase itself, if we have more than 2 phases at hand.

We can develop a simple method for determining optimal budget split based on these insights.
First, note that the difference between the performances of single phase and multiphase diffusion is not extremely high.
That is, the mean extent of diffusion in first phase of a multiphase diffusion in which the budget allotted to first phase is $k_1$ would be close to
the mean extent of diffusion for single phase with budget $k_1$,
the mean extent of diffusion in second phase of a multiphase diffusion in which the budget allotted to second phase is $k_2$ would be close to
the mean extent of additional diffusion for single phase when budget increases from $k_1$ to $k_1+k_2$.
So finding a budget split which leads to the mean extents of diffusion across intermediate phases to be equal,  is almost equivalent to partitioning the influenceability curve into $p$ pieces so that the mean extent of diffusion is split equally across these pieces.
E.g., when the number of phases is $p$ with total budget $k$, we look at the curve plot for number of seed nodes ranging from $0$ to $k$, and split the plot into $p$ equally spaced $Y$-coordinates. We look at corresponding $X$-coordinates (inverse function of influenceability curve) to obtain  values of $(k_1,k_1+k_2,\ldots,k)$ and hence derive our optimal budget split $(k_1,k_2,\ldots,k_p)$.
Rose \& flat curves being an exception, where we select one node in each non-terminal phase and  remaining nodes in  terminal phase.

\subsubsection{Additional Notes}

The expected extents of diffusion over various budget splits follow a somewhat unimodal behavior, as can be seen from  \Cref{fig:all}(a). So a multidimensional golden section search technique, if adapted well, has a good chance of finding optimal budget splits quickly.
Alternatively, instead of using our search technique for finding optimal budget split, it may be advantageous to use an iterative algorithm that converges to an optimal solution. Such algorithms usually require a good starting point for finding the optimal solution and also converging to it quickly.
Our results suggest that a budget split of $(k_q)_{q=1}^p$, where $k_q \leq k_{q+1}$
could act as a good starting point (attributed to the concave influeneability curves of real-world social networks).

\subsection{A Note on Value Decaying over Phases}

\Cref{fig:all}(b) presents the phasewise progression of multiphase diffusion.
There have been studies considering the decaying value of product or information over time
 \cite{zhang2016influence,dhamal2phase}.
 In our study, since a phase starts after the conclusion of the previous phase (which generally takes considerable number of time steps in IC model), it would almost certainly be disadvantageous to use multiple phases with product value decaying in every time step. Hence our study considers that the value of the product decays over phases rather than in every time step.
 Consider that the value of a node influenced in phase $q$ has a decay factor of $\delta^q$, where $\delta \in [0,1]$. That is, a node influenced in a later phase  provides a lesser value. 
So  given a budget split $\mathbf{K}$, the value of diffusion can be defined as 
$\sum_{q=1}^p \delta^q \beta_q(\mathbf{K})$, where $\beta_q(\mathbf{K})$ is the number of nodes influenced in phase $q$.
%
%
It is clear that lower values of $\delta$ are deterrent to using multiple phases. 
%
Given the number of phases and the corresponding optimal budget split, we could determine the value of $\delta$ below which, it would be rather advantageous to use single phase.



As per our hypothesis, an optimal budget split corresponds to a budget split which leads to equal number of nodes influenced in each of the individual phases (except for networks with `rise \& flat' type of influenceability curve).
%
%
Given that the number of phases is $p$, let $x_p$ be the number of nodes influenced in each of the $p$ phases. Since the spread achieved using single phase is lower than that achieved with $p>1$, let $\epsilon_p$ be the fractional loss incurred by using single phase instead of $p$ phases.
For $p$ phases (with the derived optimal budget split) to be advantageous over single phase, even with the incorporation of $\delta$, we should have
%

\begin{small}
\begin{align*}
& p \, x_p -\epsilon_p \, x_p \leq x_p(1+\delta+\ldots+\delta^{p-1})
\\
\Longleftrightarrow \;& p-\epsilon_p \leq \frac{1-\delta^p}{1-\delta}
\\
\Longleftrightarrow \;& \delta^p - p \, \delta + (p-1-\epsilon_p) \leq 0
\end{align*}
\end{small}
\vspace{2mm}

With $p$ and $\epsilon_p$ known, the above inequality can be easily solved to determine the range of $\delta$ subject to $\delta \in [0,1]$.
There even exist explicit solution formulae of the above inequality for $p$ upto 4 (and it is perhaps an overkill to have more than 4 phases).
Since we know $p$ and $\epsilon_p$ for different values of $p$, solving the above polynomial  gives the minimum value of $\delta$ for which, using $p$ phases (with the derived optimal budget split) holds advantage over single phase.
\Cref{tab:mindelta} presents the minimum values of $\delta$ for multiphase diffusion with the optimal budget splits as given in \Cref{tab:budgetsplits} to be advantageous over single phase, for  NetHEPT (WC) with $k=200$.

Note that there may exist a better budget split which could be advantageous as compared to single phase for a value of $\delta$ lower than the one found using  above analysis.
However, the above analysis guarantees that there exists a budget split (the previously found optimal budget split) which would lead to a better value than single phase, when the value of $\delta$ is higher than the one found using the above analysis (except for networks with `rise \& flat' type of influenceability curve).




We also did  preliminary analyses for values of $\delta=0.1,\ldots,0.9$ for identifying optimal budget split (from among the budget splits that we explored in our previous simulations) with respect to the value of diffusion while accounting for the decay factor.
We did a search over different budget splits explored in our earlier simulations, for which we already had stored the phasewise extents of diffusion
(we computed the value of a budget split by taking a weighted sum of the extents of diffusion in each phase $q$ with the weighing factor $\delta^q$).
Our observations suggest that for $\delta = 0.8,0.9$,
a budget split 
such that the number of nodes influenced in phase $p$ is approximately proportional to $\delta^p$, is optimal or near-optimal. Since the value of each node influenced in phase $p$ is $\delta^p$, we have that the total value obtained in phase $p$ is approximately proportional to $\delta^{2p}$.
The results were not very consistent for lower values of $\delta$.
We defer a more elaborate study in this direction for future work.



\begin{table}[t]
\caption{Minimum values of $\delta$ for NetHEPT (WC) with $k=200$ for multiphase with optimal budget splits (\Cref{tab:budgetsplits}) to be beneficial}
\label{tab:mindelta}
\centering
\small
\begin{tabular}{|c|c|c|c|c|}
\hline \T \B
Phases & $2$ & $3$ & $4$ & $5$
\\ \hline \T \B
Minimum $\delta$ & $0.810$ & $0.871$ & $0.904$ & $0.924$
\\ \hline
\end{tabular}
  \vspace{3mm}
\end{table}

\section{Conclusion
}

The objective of our study was to quantify and understand the effectiveness of multiphase diffusion in social networks under IC model.
We started by present a negative result that more phases do not guarantee a better extent of diffusion.
We used computationally efficient techniques for reducing the number of diffusion states in our simulations as well as searching for an optimal budget split. 
We studied the effect on the means and standard deviations of the extent of diffusion in different phases, and provided insight behind the reduction in uncertainty when we use multiple phases.
We also suggested using highly influential nodes in the initial phases since they would not only lead to a large extent of observed diffusion, but also high uncertainty, which can be then improved upon by other nodes in subsequent phases.
Our experiments suggested a significant improvement in spread when we move from single to two phases, but the marginal gain beyond three phases was usually not very significant. 
With the primary reasoning behind multiphase diffusion being able to observe in earlier phases and exploit in later phases, we observed that for most types of networks, a good balance between the two is found when the expected extent of diffusion in all the phases is almost the same.
We then presented a method for arriving at an optimal budget split using the influenceability curve of the network.
We concluded by considering a setting with decaying value of diffusion over phases, and provided a bound on the decay factor as a function of the number of phases; the multiphase diffusion would be advantageous over single phase if the value of decay factor is higher than this bound.

\subsection{Future Work}

It would  be useful to design efficient algorithms specifically for multiple phases to study larger datasets.
An elaborate study on value of diffusion decaying over phases, is warranted.
A game theoretic study would be interesting, where competing companies implement seeding in multiple phases.
The work can be extended to other diffusion models.
A multi-phase study over evolving social networks is an important practical aspect to look at.
A more theoretical study would  help lay foundation for studying multiphase information diffusion.




\vspace{2mm}
\bibliographystyle{IEEEtran}
\bibliography{Multiphase_references}  
%
%

\end{document}